\declaretheorem[numbered=yes]{theorem}
  \pgfmathsetmacro{\qclamp}{max(min(#1,1-1e-12),1e-12)}%
  \pgfmathsetmacro{\v}{max(min(#1,1),0)}%
  \pgfmathsetmacro{\vloc}{max(min(#2,1),0)}%
  \pgfmathsetmacro{\q}{(1-\v)/2}%
  \pgfmathsetmacro{\v}{max(min(#1,1),0)}%
  \pgfmathsetmacro{\q}{(1-\v)/2}%
  \pgfmathsetmacro{\v}{max(min(#1,1),0)}%
  \pgfmathsetmacro{\q}{(1-\v)/2}%
  \pgfmathsetmacro{\rad}{max(0, 2*\v*\v - 1)}%
  \pgfmathsetmacro{\pE}{(1 + sqrt(\rad))/2}%
\begin{document}

\title{Routed Bell tests with arbitrary many local parties}

\author{Gereon Ko\ss mann}
\affiliation{Institute for Quantum Information, RWTH Aachen University, Aachen, Germany}

\author{Mario Berta}
\affiliation{Institute for Quantum Information, RWTH Aachen University, Aachen, Germany}

\author{René Schwonnek}
\affiliation{Leibniz Universität Hannover, Hannover, Germany}


\date{\today}

\begin{abstract}
    Device-independent quantum key distribution (DIQKD) promises cryptographic security based solely on observed quantum correlations, yet its implementation over long distances remains limited. Routed Bell tests have recently re-emerged as a promising strategy to mitigate this limitation by enabling local self-testing of one party’s device. However, extending this idea to self-testing both communicating parties has remained unclear.

    Here we develop a general $C^*$-algebraic for routed DIQKD with multiple switches and arbitrarily many local test parties, with a conservative, state-dependent definition of Eve. Within this framework, we design and analyse four-party routed protocols that locally self-test both Alice as well as Bob, and numerically bound key rates from the full observed statistics. In the parameter regimes considered, adding a fourth party strictly improves certified key rates and lowers the non-zero key threshold. Randomized key-basis switching further amplifies this advantage. Finally, we investigate a self-test–assisted E91-type protocol that continuously interpolates between the device-dependent Shor–Preskill rate and the device-independent rate derived by [Pironio {\it et al.}, New J.~Phys.~11, 045021 (2009)].
\end{abstract}

\maketitle



\section{Introduction}\label{sec:intro}

When connected by a quantum channel of sufficient quality, two distant parties can create an information theoretically secure key \cite{Shannon2001,RENNER2008} by performing a Quantum key distribution (QKD) protocol \cite{Pirandola_2020,bbm92,Bennett1992}. The basic intuition underlying security proofs is rooted in fundamental principles of quantum physics \cite{Ekert2014}: Any eavesdropping attempt necessarily disturbs the quantum channel, which leaves a fingerprint that can be detected by the data generated during the protocol runs.
In practice, turning observed data into a security guarantee does, however, not only depend on the noise in the channel. It also hinges on how we turn the real world devices, used  at the endpoints of a channel, into a mathematical model.

In any such model, one must specify which degrees of freedom are relevant, how the devices act on them, and which imperfections are included or idealized. This leads to a natural trade-off between the strength of the assumptions imposed and the validity of a security guarantee for real world implementation. Under strong device-dependent assumptions, QKD has matured into a technology with long-distance, real-world deployments and commercial products. Device-independent (DI) QKD  \cite{Mayers,zapatero2023advances,primaatmaja2023security}, by contrast, aims for security statements that keep device assumptions at a minimum. In a nutshell, these are isolated laboratories and freely chosen measurement settings \cite{primaatmaja2023security}. As such a security guarantee comes close to rely on observed input-output statistics alone. Nevertheless, this comes at the cost of imposing high technological and mathematical hurdles. It took roughly three decades of work since the first broadly accepted demonstrations where performed. However, those proofs of principles are still limited, for instance to short distances \cite{Nadlinger2022,liu2022toward} or low rates \cite{Zhang2022,lu2026device}.   
\begin{figure}
    \centering
    \includegraphics[width=1\linewidth]{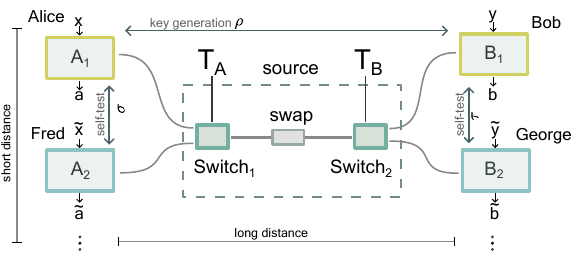}
    \caption{Two-sided routed architecture combining local Bell tests with several parties. In each round, local routing variables $T_A$ and $T_B$ on switches decide on the distribution of states for the local and long-range link. Two switches can be combined by entanglement swapping \cite{Zukowski1993} or via event-ready schemes employing a quantum relay.}
    \label{fig:scheme_two_switches}
\end{figure}

In this work, we study protocol architectures that promise to bridge this gap. Ideally, they retain the device-independence of the security claim while aiming for performance close to device-dependent settings.

In a sufficiently entangled quantum system measurements and states can self-test~\cite{coladangelo2017all,scarani2019bell,vsupic2020self,operator_algebra_self_test2024}. Protocols based on routed local/global Bell-test architectures (see \autoref{fig:scheme_two_switches}) employ this by placing additional local test parties inside the users’ laboratories. They can bypass the noise in a long range quantum channel and create local states of high quality that are used to self-test the relevant measurement behaviour of the endpoint devices for the key generation \cite{Lim2013,Chaturvedi2024,Lobo2024,Le_Roy_Deloison_2025,Tan_2024}. In a full protocol a switch selects the connectivity in each round. Either the endpoints are connected to each other for key generation, or each endpoint is connected to its local tester for certification. If the local self-tests are sufficiently strong, then\,---\,up to a controlled error\,---\,the long-distance key-generation statistics behave as if the endpoint measurements were trusted. In the ideal self-testing limit, one therefore expects the DI key rate to approach the corresponding device-dependent benchmark, with the gap governed by the self-testing error. We observe this behaviour in \autoref{fig:interpolation}, where we numerical plot asymptotic key rates.

This idea goes back to  works of 
Lim and Tomamichel \cite{Lim2013}. They provide a rate formula that is analytic, but in most regimes, unfortunately, suboptimal. Numerical evidence for rate improvements in relevant regimes has been reported \cite{Tan_2024,Le_Roy_Deloison_2025}.
Here we provide a general modelling and proof framework that turns this intuition into quantitative security statements and allows us to optimize protocols beyond previously accessible settings.

Existing analyses also restrict, to scenarios with only one additional local test on only one side. In particular, it remained unclear how to extend the routed approach to simultaneously certify both communicating devices within a single conservative DI security model, and how to scale it to arbitrarily many local parties and switches without relaxing the attacker model. In our work, we exactly achieve this: We formulate a fully device-independent, algebraic model for multi-party routed protocols with multiple switches and a conservative definition of Eve. Within this model, we then analyse new classes of routed four-party protocols and derive tighter key-rate bounds from the full observed statistics. In the regimes studied here, our numerics yield the best currently certified device-independent key rates for such routed architectures and identify promising candidates for future long-range experiments.


\section{Overview}

\textbf{A minimal model} -- Modelling switches is not standard. In Section~\ref{sec:model}, we introduce a minimal yet complete device-independent framework suitable for describing routed QKD protocols with multiple parties and multiple switches. In contrast to previous approaches, we model switches solely through marginal constraints, as specified in Eq.~\eqref{eq:marginal_constraint}. These constraints ensure that, even if the switches are controlled by Eve, they cannot influence devices located within the local laboratories.

Furthermore, we employ the language of $C^*$-algebras generated by projections as the proper language to express the entirety of all attacks on all quantum systems that could physically implement a protocol under investigation. While this perspective is a well-established tool in the analysis of Bell type correlations and non-local games \cite{summers1987bell,van2024schmidt,Paddock_2023}, it is only rarely used for modelling  cryptography \cite{berta2016smooth,Schwonnek2021}. In our case, this perspective  provides us with a natural notion of a state/channel dependent adversary. This avoids  technical hurdles, which previously prevented the investigation of multi switch settings. We outline in Section \ref{sec:recover} how our ansatz relates to prior models and results.\\

\textbf{Numerical Investigations} -- Deriving analytic key-rate formulas protocols with several parties and constraints from the full observed statistics seems out of reach. Consequently, we compute key rates numerically in Section \ref{sec:numerics}. To this end, we employ the method of \cite{koßmann2024boundingconditionalvonneumannentropy}, which reduces the entropy minimization to a non-commutative polynomial optimization that can be approximated via semidefinite programming \cite{Navascus2008}. In the one-sided device-independent setting (Alice characterized), it was observed that additionally characterizing Bob does not improve key rates \cite{tan2021computing}. This might suggest that tests on Bob's system are generally not  helpful. We show that this intuition does not hold in routed four-party protocols. For an extension of DI--BB84 \cite{Tan_2024} we find that adding a second local self-test on Bob's side strictly improves rates and lowers the threshold for non-zero key (see \autoref{fig:keyrate-qber-switches}). Moreover, within the same routed protocol family we introduce random key-basis switching \cite{schwonnek2020robust} on Bob's side, i.e., Bob alternates between two key-generation measurements.
This refinement of the previous scheme leads to increased certified rates and an even lower zero-key threshold in our numerics (see \autoref{fig:routed-random-key-basis}). Finally, we consider a variant of \cite{pironio2009device} enhanced by local self-tests. We find that the error in self-testing gives an interpolation between an optimal device-dependent and an optimal device-independent performance (see \autoref{fig:interpolation}). As the local self-test becomes ideal we approach Shor--Preskill key rate \cite{Shor_2000} of  BB84-type protocols, while for weak local self-testing we recover the device-independent performance of \cite{pironio2009device}. 


\section{Results}\label{sec:results}

\subsection{Modelling of a multi switch setting with many local tests}\label{sec:model}

The basic setup considered throughout this work is depicted in \autoref{fig:scheme_two_switches}. We have two sets of parties $(A_1,\dots, A_n)$ and $(B_1,\dots, B_n)$ at distant locations. The distance between a party $A_i$ and a party $B_j$ is large and the ability of creating high quality quantum states between them is limited. In contrast, all parties $\{A_i\}$ and $\{B_j\}$, respectively, are close to each other and are capable of performing a high quality Bell correlation experiment.

In order to keep notations on a convenient level we will focus in the following on describing a setting with  two switches and only four parties called Alice, Bob, Fred, and George. It will, however, become clear that nothing prevents a straightforward generalization to arbitrarily many parties on each side. Alice and Fred will be located close to each other, such that Fred can assist Alice in self-testing her devices. The same holds for Bob and George.\\

\textbf{Basic inputs and outputs} --
In each round of a QKD protocol each party will receive a share of a global quantum state which is then measured locally on a device that receives a random input and responds with an outcome. In a physical implementation the input typically corresponds to the choice of a measurement base and the output is the result of that measurement, just as in a typical Bell correlation experiment. 
We model measurement devices by sets of positive operator valued measures. In a device-independent setting, these can 
w.l.o.g. assumed to be projective (see e.g. \cite{Schwonnek2021}). Concretely we will write    
$\{M_{a\vert x}\}_a$ for Alice, $\{\tilde{M}_{\tilde{a}\vert \tilde{x}}\}_{\tilde{a}}$ for Fred, $\{N_{b\vert y}\}_b$ for Bob, and $\{\tilde{N}_{\tilde{b}\vert \tilde{y}}\}_{\tilde{b}}$ for George, where $x \in \mathsf{X}$, $\tilde{x} \in \tilde{\mathsf{X}}$, $y \in \mathsf{Y}$, and $\tilde{y} \in \tilde{\mathsf{Y}}$  label the respective input sets. The respective outputs are labelled by  $a\in\mathsf{A}$, $\tilde a\in \tilde{\mathsf{A}}$, $b \in \mathsf{B}$ and $\tilde b \in\tilde{\mathsf{B}}$. 
We assume that  all measurements are sampled from private genuine sources of local randomness within each laboratory. We further assume  that locality loopholes are closed. This means we trust quantum physics and assume a shielding between the local laboratories.\\

\textbf{Switches and sources} --
States are generated by sources. Their distribution to the parties is orchestrated by switches. Their setting, in each round, is controlled by attacker independent random variables. In our model, in contrast to previous works \cite{Tan_2024}, we will simply absorb switches into the sources and merely consider a virtual  source  (see dashed box in \autoref{fig:scheme_two_switches}). It is controlled by random variables $T_A$ and $T_B$, that determine the internal routing and by this the state distribution. In our case, with only two parties on each side, these random variables are binary.  In this abstract view, internal mechanisms, like a repeater or relay needed for connecting two physically distinct sources, are absorbed and do not longer play an explicit role in the modelling task.

What can be seen from the outside is that each setting of the switches $(T_A,T_B)=s$ will lead to a different global state $\omega^s$. In our case we hence have a collection of four different states. We employ the convention that the state created for $(T_A,T_B)=(0,0)$ will be used for key generation, denoted by $\rho:=\omega^{(0,0)}$. Rounds with  $(0,1)$ and $(1,0)$ will be used for local Bell tests. We denote those states by $\sigma:=\omega^{(0,1)}$ and $\tau:=\omega^{(1,0)}$. As for measurements, we assume that the inputs for the switches are sampled from private, genuine sources of local randomness. All other functionalities of the switches could in principle be controlled by Eve.\\

\textbf{System description as universal $C^\ast$-algebra} -- In a device-independent model we avoid fixing Hilbert spaces a priori. A common folklore is to propose an optimization over all states and all measurements in all (finite) Hilbert spaces, in order to capture the worst case of all possible implementations and attacks of an actual physical setting. This object can however be mathematically ill-defined and difficult to handle. Instead, we treat the measurement operators as abstract generators of a universal $C^*$-algebra \cite{Blackadar2006OperatorAlgebras}  that encodes only the algebraic constraints that are imposed by the measurement structure and by causal separations. This object, and optimizations over it, are mathematically consistent. 
Concretely, we treat the measurements $
\{M_{a|x}\}_{a,x},\quad \{\tilde M_{\tilde a|\tilde x}\}_{\tilde a,\tilde x},\quad
\{N_{b|y}\}_{b,y},\quad \{\tilde N_{\tilde b|\tilde y}\}_{\tilde b,\tilde y}$
as abstract symbols and impose the following relations \emph{(PVM relations)}. For each party and each fixed input, the corresponding measurement is a projection-valued measure as
\begin{align}
M_{a|x} = M_{a|x}^\ast = M_{a|x}^2,\;
M_{a|x}M_{a'|x} = 0 \;
\sum_a M_{a|x} = \mathbb I \label{eq:PVM-A}
\end{align}
and analogously for $\tilde M_{\tilde a|\tilde x}$, $N_{b|y}$, and $\tilde N_{\tilde b|\tilde y}$. Operators belonging to different parties commute as \emph{(Separation relations).}
\begin{equation}
[ M_{a|x},\tilde M_{\tilde a|\tilde x} ] =
[ M_{a|x}, N_{b|y} ] =
[ M_{a|x},\tilde N_{\tilde b|\tilde y} ]  = 0,
\label{eq:comm-all}
\end{equation}
and similarly for every pair of generators assigned to distinct parties. These relations encode the no-communication assumption at the operator level.\\

\textbf{Local and global algebras} --
For Alice let $\mathcal A_A$ be the universal unital $C^\ast$-algebra generated by $\{M_{a|x}\}_{a,x}$ subject to Eq.~\eqref{eq:PVM-A}, and define $\mathcal A_F,\mathcal A_B,\mathcal A_G$ analogously for Fred, George, and Bob. The full measurement algebra is then the universal $C^\ast$-algebra generated by all parties with the additional commutation relations from Eq.~\eqref{eq:comm-all}. Equivalently, one may write
\begin{align}
\mathcal A_{\mathrm{all}}
\;\cong\;
\mathcal A_A \otimes_{\max} \mathcal A_F \otimes_{\max} \mathcal A_B \otimes_{\max} \mathcal A_G,
\end{align}
where $\otimes_{\max}$ denotes the maximal $C^\ast$-tensor product. This is precisely the completion that enforces mutual commutativity of the represented subalgebras \cite{Blackadar2006OperatorAlgebras}. 

The beauty of this ansatz is that $\mathcal{A}_{all}$ naturally comes with a well-defined state space, formally given by $\mathcal S(\mathcal A_{\mathrm{all}})
:=\{\omega\in \mathcal A_{\mathrm{all}}^\ast:\ \omega\ge 0,\ \omega(\mathbb I )=1\}$. That is, by all positive linear normalized functionals on $\mathcal A_{\mathrm{all}}$. States on subsystems are now naturally obtained by restricting a global state to an corresponding subalgebra. Accordingly we will, e.g., use the notation $\omega_A:=\omega|_{\mathcal A_A}$ to denote a marginal state of Alice and similar for the other parties.\\

\textbf{Alice and Bob are independent of the switch} -- The basic structural assumption of our model is that the switch positions do not reconfigure the devices in Alice's and Bob's laboratories. Operationally, this means that no measurement acting only on Alice's and Bob's systems  reveals any information about the current switch setting. Equivalently, the reduced states in Alice's and Bob's laboratories are independent of the switch. Accordingly, for the key-generation state $\rho$ and the testing states $\sigma,\tau$, we impose the marginal constraints
\begin{align}\label{eq:marginal_constraint}
  \rho_A = \sigma_A
  \quad \text{and} \quad
  \rho_B = \tau_B .
\end{align}
In particular, any statistic $p(a|x)$ observed by Alice (and analogously for Bob) is identical across the corresponding switch settings.\\

\textbf{The role of Eve} -- How to model Eve in routed/local-test scenarios has caused recurring confusion in the literature. Namely, is Eve an additional explicit party, with her own algebra that must commute with (some of) Alice, Bob, Fred, and George, or is she the abstract environment that purifies the relevant reduced state? Different choices appear implicitly in previous analyses and can lead to ambiguous statements about Eve's ``access'' to the local laboratories \cite{Lobo2024,Tan_2024,CerveroMartn2025}. We adopt a definition that avoids this ambiguity by fixing Eve relative to a state $\rho_{AB}$.

Concretely, let $\rho_{AB}$ be the Alice--Bob marginal of the  key generation state $\rho$. We apply the GNS construction \cite{Blackadar2006OperatorAlgebras} to the pair $(\mathcal A_A\otimes_{\max}\mathcal A_B,\rho_{AB})$ and obtain a representation
\[
\pi_{\rho}:\mathcal A_A\otimes_{\max}\mathcal A_B \to \mathcal B(\mathcal H_{\rho}),
\quad
\rho_{AB}(X)=\langle\Omega_{\rho},\,\pi_{\rho}(X)\Omega_{\rho}\rangle .
\]
We then define the $\rho$-dependent von Neumann algebra of Alice and Bob as
$
\mathcal M_{AB}:=\pi_{\rho}(\mathcal A_A\otimes_{\max}\mathcal A_B)^{\prime\prime}
$
and \emph{Eve's} algebra as the commutant
$\mathcal M_E := \mathcal M_{AB}^{\prime}$.

By construction, Eve commutes with exactly the observables accessible to Alice and Bob in the key-generation experiment. No additional commutation requirements have to be guessed. In particular, Fred and George are not part of the definition of Eve. Their role is to generate test statistics that restrict the admissible global models, i.e.~the family of states compatible with all observed data and the marginal constraint. The quantity to be minimized in a security analysis is the conditional entropy with respect to $\mathcal M_E$ induced by the fixed marginal $\rho_{AB}$. This separation is precisely what keeps the attacker model conservative while avoiding the question of whether Eve has access to Fred and George as independent subsystems \cite{Lobo2024,Tan_2024,cerveromartín2023deviceindependentsecurityquantum}. In our numerics, now coming as a result, we use that this assumption can be strengthened to assuming that Eve also commutes with Fred and George, in the key round, without changing the value of a key rate estimate.\\

\textbf{Asymptotic key rates} -- We consider the asymptotic i.i.d.\ regime and assume that the observed frequencies converge to conditional distributions
$p^s(a,\tilde a,b,\tilde b\,|\,x,\tilde x,y,\tilde y)$ for each switch setting $s=(T_A,T_B)$. 
In our $C^\ast$-model these data impose linear constraints on the corresponding states $\omega^s\in\mathsf S(\mathcal A_{\mathrm{all}})$ via
\begin{equation}\label{eq:lin_constraints}
p^s(a,\tilde a,b,\tilde b\,|\,x,\tilde x,y,\tilde y)
=
\omega^s\!\left(M_{a|x}\,\tilde M_{\tilde a|\tilde x}\,N_{b|y}\,\tilde N_{\tilde b|\tilde y}\right),
\end{equation}
together with the marginal (switch-independence) conditions on Alice and Bob. Let $\mathcal F(p)$ denote the resulting feasible set of collections
$\{\omega^s\}_s$ consistent with all constraints.

For key generation we define Eve relative to $\rho_{AB}$ via the GNS construction as above. Let $K_A$ and $K_B$ denote the classical random variables Alice and Bob obtain by a measurement in a key generation round. We write $H(K_A|E)_{\rho_{AB}}$ for the corresponding conditional von Neumann entropy of Alice conditioned on Eve, and $H(K_A|K_B)_{\rho_{AB}}$ for the conditional entropy
of Alice given Bob (both evaluated for the classical-quantum state generated by the key-map applied to $\rho$).

The Devetak--Winter formula \cite{Devetak2005} then yields an achievable asymptotic secret key rate of
\begin{equation}\label{eq:DW_rate}
r_{key}\ \ge\ \inf_{\{\rho^s\}_s\in \mathcal F(p)}
\Bigl[ H(K_A|E)_{\rho_{AB}}\ -\ H(K_A|K_B)_{\rho_{AB}}\Bigr],
\end{equation}
where the infimum reflects the device-independent (worst-case) nature of the analysis: We minimize over all global models for all switch settings
compatible with the observed statistics, while the entropic terms are computed from the key-generation state $\omega^{00}$ only. The computation of the rate in Eq.~\eqref{eq:DW_rate} can be done using the methods described in \cite{ReliableEstimates_2025}. Details are described later in the Method section. 

Finite size corrections to these rates can be bounded via the toolbox of entropy accumulation theorems, see \cite{ArnonFriedman2019,arqand2024generalizedrenyientropyaccumulation} and references therein. By an argumentation parallel to  \cite{Le_Roy_Deloison_2025}, we can make our approach compatible with those techniques.


\subsection{Relation to previous models} \label{sec:recover}

We next relate our C$^\ast$-algebraic routed model to earlier local-test and routed formulations
\cite{Lim2013,Tan_2024,Le_Roy_Deloison_2025}.
We focus on modelling choices that affect the worst-case optimization and indicate how the corresponding pictures are recovered as specializations of our framework.\\

\textbf{Where the switch lives} -- Routed formulations typically introduce an explicit switch input that routes the flying system to a nearby test device or to the distant device, and assumptions are phrased at the level of this routing element \cite{Le_Roy_Deloison_2025,Tan_2024}.
Local-test formulations use an analogous mode choice between test and key behaviour \cite{Lim2013}.
In our formulation, the switch does not appear as an additional subsystem.
Instead, it labels a family of branch states $\{\omega^s\}_s$ on a fixed measurement algebra, and the intended dependence structure is imposed by linear marginal constraints between branches, such as $\sigma_A=\rho_A$ and $\tau_B=\rho_B$.
This isolates the operational content of switch independence in a way that does not depend on how the switch is implemented and matches the role it plays in routed security analyses \cite{Le_Roy_Deloison_2025}.

An explicit-switch description is recovered as follows.
Let $S$ be a classical register and let $\Omega$ be a state on $\mathcal A_{\mathrm{all}}\otimes \mathbb{C}^{|S|}$ with basis $\{\lvert s\rangle\}_s$.
For $p_s:=\Omega(\mathbb{I}\otimes \lvert s\rangle\langle s\rvert)>0$, define conditional branch states
\[
\omega^s(a) := \frac{\Omega\bigl(a\otimes \lvert s\rangle\langle s\rvert\bigr)}{p_s}
\qquad (a\in\mathcal A_{\mathrm{all}}).
\]
Conversely, given a family $\{\omega^s\}_s$ and a distribution $\{p_s\}_s$, one packages them into a single classical-quantum state by
$\Omega(a\otimes \lvert s\rangle\langle s\rvert):=p_s\,\omega^s(a)$.
Thus, explicit-switch models correspond to working with one joint state and passing to conditional branch states, while switch-independence assumptions become constraints relating branch marginals.\\

\textbf{Enforced structure between subsystems} -- Earlier analyses are typically formulated on concrete Hilbert spaces with an explicit identification of subsystems, for instance by a tensor factorization and by specifying which devices act on which factors in which branch \cite{Lim2013,Tan_2024,Le_Roy_Deloison_2025}.
This fixes a representational layer beyond the operational separation assumptions.
Our framework fixes only what is operationally required, namely the PVM relations and separation relations, at the level of a universal measurement C$^\ast$-algebra.
No tensor-product decomposition is imposed a priori.
Tensor-product models are recovered by restricting to those representations and states of the universal algebra that admit the desired factorization.
Dropping this factorization yields the commuting-operator viewpoint compatible with the SDP relaxations used in \cite{Tan_2024,Le_Roy_Deloison_2025}.\\

\textbf{Modelling Eve} -- In routed and local-test formulations, Eve is usually introduced as an explicit side-information system, often via a purification. In particular, it is common to allow the effective test measurement to act jointly on the routed degrees of freedom together with Eve in the test branch \cite{Le_Roy_Deloison_2025}.
Tan et al.\ \cite{Tan_2024} exploit a related reformulation by absorbing intermediate processing into a single state and allowing the relevant measurements to act on an enlarged system without enforcing a fixed tensor split. This highlights that a fixed a priori identification of Eve with a particular subsystem is not canonical once one considers multiple branches and auxiliary test devices.

We therefore do not postulate a fixed Eve algebra a priori. Instead, Eve is defined from the key marginal $\rho_{AB}$ via the GNS construction as the commutant of the represented Alice--Bob algebra. Any model with a fixed Eve system that provides side information compatible with $\rho_{AB}$ is contained as a special case.
The converse direction is not canonical in a multi-branch optimization, since varying $\rho_{AB}$ changes the representation and hence the commutant.
Fitting all candidates into a single ambient Eve algebra would require an additional global construction that is not unique and introduces extra structure that is irrelevant for the key-rate bound. Our definition avoids this bookkeeping and keeps auxiliary test parties out of Eve's definition, they enter only through feasibility constraints induced by the observed test statistics.\\

\textbf{Recovery of bounds} -- The analytic bound of Lim {\it et al.}\ \cite{Lim2013} is recovered in the assisted Alice setting with a single local CHSH test. 
The key step is to convert the observed local-game score into a bound on the effective overlap parameter controlling the BB84-type phase-error contribution.
This conversion uses robust self-testing input for local games, which can now be integrated into the algebraic computation.

For routed Bell tests, Le Roy-Deloison et al.\ \cite{Le_Roy_Deloison_2025} formulate the model through algebraic constraints capturing the branch structure, including joint-measurability type constraints. These constraints are recovered by restricting our feasible set to the corresponding single-switch instance and matching the same separation pattern.
Our numerical evaluation of the conditional entropy uses a different integral representation than \cite{Le_Roy_Deloison_2025}, which changes only the entropy evaluation method, not the admissible set defined by the algebraic constraints.


\subsection{Numerical analysis of protocols with four parties and two switches} \label{sec:numerics}

A central step in the quantitative analysis of a QKD protocol is a  reliable bound on the raw key rate optimization from Eq.~\eqref{eq:DW_rate}. If the feasible set $\mathcal{F}(\rho)$ depends on more parameters than only a single Bell test score, good analytical bounds are unlikely to exist and numerical methods have to be employed. The  motivation for incorporating this additional data into a key rate computation stems from the potential for certifying higher key rates and thresholds. In the following numerical investigations, we therefore consider the full measurement statistics in three different protocols.  

There are at least two essential technical hurdles to take when it comes to bounding the  optimization in Eq.~\eqref{eq:DW_rate}. Firstly, adopting a common view \cite{tan2021computing,Navascus2007}, the feasible set of an DI-experiment has to take into account \emph{all} states on \emph{all} Hilbert spaces, which seems to be an abhorrently large object. In our language this translates to an  optimization over the state space of an $C^*$-algebra, which is now properly defined but typically still infinite dimensional and numerically inaccessible. Secondly, conditional entropies are highly non linear nor Lipschitz-continuos functionals that are already hard to optimize in the finite-dimensional case. 

Computing key rates in a DI setting  therefore 
remained to be an outstanding problem for quite some time \cite{Pirandola_2020}.
Nevertheless, recent progresses now equips us with toolsets \cite{tan2021computing,Brown_2024,brown2021computing,koßmann2024boundingconditionalvonneumannentropy} for handling this. On the one hand, and somewhat surprisingly, optimizations on $C^*$-algebras turned out to be reliably boundable for functionals that are linear \cite{Navascus2007,Navascus2008,koßmann2023hierarchies} or even of small degree  in the underlying state $\omega$. As a result we know have hierarchies, like the famous NPA \cite{Navascus2007}, of SDP \cite{moran2024uncertainty} computable outer bounds for this problem. On the other hand, we can accompany this with methods for reducing the minimization of the conditional entropy to a sequence on linear problems. In \cite{tan2021computing} a non-commutative Gibbs variation is employed. An integral representation based on Gauß-Radau quadratures is used by \cite{Brown_2024}. The work  \cite{koßmann2024boundingconditionalvonneumannentropy}, which reportedly gives the computationally most efficient bounds and is our method of choice, is based on Frenkel's integral formula \cite{Frenkel2023} for relative entropies. Full details about the SDP formulation of Eq.~\eqref{eq:DW_rate} can be found in the methods section. 

In order to asses and  benchmark a concrete protocol, without performing an actual experiment,  we have to generate/simulate realistic data. We will refer to the virtual implementation of a protocol that produces this data as \textit{honest implementation}. For the following setting we consider that all measurements are performed effectively on qubits. 
Furthermore, we model entangled states between two parties by a two qubit Werner states \cite{werner1989quantum}. We stick to the convention of parametrizing these states by 
\begin{align}
    W(v):=v\Psi^-+(1-v) \mathbb{I}/4,
\end{align}
i.e., by the visibility parameter $v$. This parameter determines a the quantum bit error via the formula $Q=(1-v)/2$. We model the quality of the long-range connection in the honest implementation by $\rho_{AB}=W(v)$. The local states used for self-testing are modelled by $\tau=\sigma=W(v_{loc})$.  The choice of the measurements is protocol specific and discussed in the following subsections. Other works \cite{Le_Roy_Deloison_2025} parametrise imperfections via the detector efficiency $\eta$, whereas we use the visibility $v$ as a more general effective noise parameter that can subsume loss and additional experimental imperfections. The most appropriate choice ultimately depends on the targeted hardware implementation.


\subsubsection{Device-Independent BB84}\label{prot:lobotan}

We now consider a four party generalization of a three party protocol investigated by \cite{Tan_2024,Le_Roy_Deloison_2025,Lim2013}. 
Alice and Bob perform a BB84-type protocol. Fred assists Alice via a local test. George assists Bob via a local test. This protocol can be seen as  symmetric extension of  the protocol of Lim {\it et al.}, which  did not incorporate George as an additional local party on Bob's side.

\textbf{Protocol description} -- \\[0.1cm]
\noindent\fbox{
\parbox{0.93\linewidth}{
\textbf{Protocol 1:} Spot checking, 4 parties, binary inputs and outputs, BB84-key generation. \vspace{0.1cm}
\hrule\vspace{0.1cm}
\textbf{1. Measurements}: 
This step is carried out in $N$ rounds. The inputs to the rounds are drawn randomly and independently \\
For Alice: $P(x=0)=p_A, \;P(x=1)=1-p_A$ \\
For Bob $P(y=0)=p_B, P(y=1)=1-p_B$\\
For Fred: $P(\tilde{x}=0)=p_F, \;P(\tilde{x}=1)=1-p_F$\\
For George: $P(\tilde{y}=0)=p_G, \; P(\tilde{y}=1)=1-p_G$\\
For Switch A: $P(T_A=0)=t_A,\;P(T_A=1)=1-t_A$\\
For Switch B: $P(T_B=0)=t_B,\; P(T_B=1)=1-t_B$ \\[0.2cm]
\textbf{2. Sifting:} All inputs are announced over authenticated public channels. A preshared binary random variable $R$ with distribution $(r,1-r)$ is used to select key generation rounds. From rounds with $x=y=T_A=T_B=R=0$ Alice and Bob select a subset of length $\sim rt_A t_B p_A p_B N$. 
The outputs $(a,b)$ of these rounds form a raw key. The data of all other rounds is kept for parameter estimation.\\[0.2cm]
\textbf{3. Parameter estimation:} The data from the test rounds is used to estimate the distribution $p^s(a,\tilde a,b,\tilde b\,|\,x,\tilde x,y,\tilde y)$ by frequencies. If this estimate diverges from a predefined expected behaviour $p^s_{expect}$ by more than a threshold $\varepsilon_{pe}$ the protocol is aborted. \\[0.2cm]
\textbf{4. \& 5. Error correction and Privacy amplification:} 
One-way error correction from Alice to Bob and privacy amplification is performed. 
}
}\\[0.1cm]

\textbf{Parameter Choice} -- The specific distributions of the random variables $T_A$ and $T_B$ do not influence the asymptotic analysis. In general, the ratio of test rounds to key rounds can be made arbitrary small, i.e., $t_A,t_B\rightarrow 1$ when the total number of rounds $N$ approaches infinity. An analogue argumentation holds for the parameters $p_A$ and $p_B$.
We consider an honest implementation in which all sources of imperfection are modelled by local depolarising noise. We assume that local devices can be decomposed in a perfect device followed by depolarising noise and all channels act as identity followed by depolarising noise. We simulate statistics for this protocol, with $p_F=p_G=1/2$ and Werner states with visibilities $v$ and $v_{loc}$ as described above. By switching to the Schrödinger picture, all influences of the depolarising noise are captured by these two parameters. In the honest implementation, we assume that the ideal measurements of Alice and Bob are optimized for a BB84 type protocol. This is for $x=0$ and $y=0$ they will respectively perform a measurement in the Pauli-$Z$ base on a qubit. For $x=1$ and $y=1$ they will respectively perform a measurement in the Pauli-$X$ base. Fred and George pick their measurements in order to maximize the performance in of a self-test. In the depolarizing noise model employed here, this is equivalent to picking measurements that optimize the value of a CHSH game \cite{Clauser1969}. These are, again assuming qubits, measurements in the Pauli $(Z+ X)/\sqrt{2}$ basis for $\tilde{x}=0$ and respectively $\tilde{y}=0$, such as  $(Z-X)/\sqrt{2}$ for $\tilde{x}=1$ and respectively $\tilde{y}=1$.\\[0.1cm]

 \begin{figure}[ht]
  \centering

\definecolor{Vis0930}{HTML}{377EB8}
\definecolor{Vis0940}{HTML}{4DAF4A}
\definecolor{Vis0950}{HTML}{984EA3}
\definecolor{Vis0960}{HTML}{FF7F00}
\definecolor{Vis0970}{HTML}{E41A1C}
\definecolor{Vis0980}{HTML}{A65628}
\definecolor{Vis0990}{HTML}{F781BF}
\definecolor{Vis0995}{HTML}{999999}
\definecolor{Vis1000}{HTML}{1B9E77}

\begin{tikzpicture}
\begin{axis}[
  scaled x ticks=false,
  xticklabel=\pgfkeys{/pgf/number format/.cd,fixed,precision=3,zerofill}\pgfmathprintnumber{\tick},
  width=\linewidth,
  height=0.8\linewidth,
  xlabel={Visibility $v$},
  ylabel={Key rate},
  xticklabel=\pgfkeys{/pgf/number format/.cd,fixed,precision=2,zerofill}\pgfmathprintnumber{\tick},
  grid=both,
  grid style={dashed,very thin},
  tick align=outside,
  tick style={black},
  xmin=0.785, xmax=1.0,
  ymin=0, ymax=1.0,
  ymajorticks=true,
  scaled ticks=false,
  legend cell align=left,
  legend columns=3,
  legend style={
    at={(0.5,-0.2)},
    anchor=north,
    fill=white, fill opacity=0.85,
    draw opacity=1, text opacity=1,
    font=\small,
  },
]

\addplot+[
  black,
  thick,
  dash pattern=on 1.2pt off 1.6pt,
  line cap=round,
  mark=none,
  domain=0.785:1,
  samples=300,
  forget plot
]
{max(0, 1 - 2*hbin((1-x)/2))};

\IfFileExists{results_routed/keyrate_bb84_withoutD_local_0p930.txt}{%
  \addplot+[thick, solid, mark=none, color=Vis0930]
    table[x index=0, y index=1] {results_routed/keyrate_bb84_withoutD_local_0p930.txt};
  \addlegendentry{$v_{\operatorname{loc}}=0.93$}
  \addplot+[thick, densely dashed, mark=none, color=Vis0930, forget plot]
    table[x index=0, y index=1] {results_routed/keyrate_bb84_local_0p930.txt};
}{}

\IfFileExists{results_routed/keyrate_bb84_withoutD_local_0p940.txt}{%
  \addplot+[thick, solid, mark=none, color=Vis0940]
    table[x index=0, y index=1] {results_routed/keyrate_bb84_withoutD_local_0p940.txt};
  \addlegendentry{$v_{\operatorname{loc}}=0.94$}
  \addplot+[thick, densely dashed, mark=none, color=Vis0940, forget plot]
    table[x index=0, y index=1] {results_routed/keyrate_bb84_local_0p940.txt};
}{}

\IfFileExists{results_routed/keyrate_bb84_withoutD_local_0p960.txt}{%
  \addplot+[thick, solid, mark=none, color=Vis0960]
    table[x index=0, y index=1] {results_routed/keyrate_bb84_withoutD_local_0p960.txt};
  \addlegendentry{$v_{\operatorname{loc}}=0.96$}
  \addplot+[thick, densely dashed, mark=none, color=Vis0960, forget plot]
    table[x index=0, y index=1] {results_routed/keyrate_bb84_local_0p960.txt};
}{}

\IfFileExists{results_routed/keyrate_bb84_withoutD_local_0p970.txt}{%
  \addplot+[thick, solid, mark=none, color=Vis0970]
    table[x index=0, y index=1] {results_routed/keyrate_bb84_withoutD_local_0p970.txt};
  \addlegendentry{$v_{\operatorname{loc}}=0.97$}
  \addplot+[thick, densely dashed, mark=none, color=Vis0970, forget plot]
    table[x index=0, y index=1] {results_routed/keyrate_bb84_local_0p970.txt};
}{}

\IfFileExists{results_routed/keyrate_bb84_withoutD_local_0p990.txt}{%
  \addplot+[thick, solid, mark=none, color=Vis0990]
    table[x index=0, y index=1] {results_routed/keyrate_bb84_withoutD_local_0p990.txt};
  \addlegendentry{$v_{\operatorname{loc}}=0.99$}
  \addplot+[thick, densely dashed, mark=none, color=Vis0990, forget plot]
    table[x index=0, y index=1] {results_routed/keyrate_bb84_local_0p990.txt};
}{}

\IfFileExists{results_routed/keyrate_bb84_withoutD_local_1p000.txt}{%
  \addplot+[thick, solid, mark=none, color=Vis1000]
    table[x index=0, y index=1] {results_routed/keyrate_bb84_withoutD_local_1p000.txt};
  \addlegendentry{$v_{\operatorname{loc}}=1.00$}
  \addplot+[thick, densely dashed, mark=none, color=Vis1000, forget plot]
    table[x index=0, y index=1] {results_routed/keyrate_bb84_local_1p000.txt};
}{}

\end{axis}
\end{tikzpicture}
  \caption{Numerical results obtained by solving Eq.~\eqref{eq:optimization} under the NPA relaxation with \cite{ReliableEstimates_2025}. Curves are shown as a function of the (shared) Werner visibility $v$ (equivalently, for Werner-type BB84 statistics, the corresponding QBER is $Q=(1-v)/2$). Colours label the local visibility parameter. For each $v_{\operatorname{loc}}$\,---\,the local visibility of the self-tests\,---\,the solid curve corresponds to the case without the second switch, while the dashed curve corresponds to applying the relaxation on both parties (two switches). The black dotted curve shows the Shor--Preskill \cite{Shor_2000} asymptotic key-rate formula  $r_{\mathrm{SP}}(Q)=1-2h_2(Q)$ with $Q=(1-v)/2$.}
  \label{fig:keyrate-qber-switches}
\end{figure}

\textbf{Results} -- The results of our simulation are presented in \autoref{fig:keyrate-qber-switches} as a function of $v$ for different values of $v_{loc}$. We compare this four party protocol (dashed lines) to its three party predecessor (solid lines) considered in \cite{Le_Roy_Deloison_2025}. Computations were done up to NPA-level 4 on a desktop computer. Every data point took seconds (!) of computing time. We see that including statistics of a fourth party, effectively used for self-testing Bob, will increase key rates whenever $v_{loc}<1$. That is, in regimes with imperfect local self-test. The strength of this effect, i.e. the relative size of the improvement, will increase with the imperfection of the local tests.


\subsubsection{Device-Independent BB84 with random key bases}\label{prot:randomkey}

In device-dependent protocols, like in modern implementations of the entanglement-based BB84, the base used for key extraction is typically fixed. It is straightforward to check that alternating the key base, as originally proposed by \cite{Bennett_2014}, does not improve key rates but adds a post-selection overhead in the amounts of samples required for key generation. Reasons for this can be traced back to symmetries within the set of optimal attack strategies of Eve. It was then shown in \cite{Schwonnek2021}, that these do not longer appear in a fully device-independent situation. Randomly switching the measurement base used for key extraction during a protocol run can yield a substantial advantage. 
It is therefore striking to investigate if this effect also plays a role for the partially characterized devices considered in this work. 
In the following protocol, we adopt this idea and modify the previous protocol by randomly switching the key bases.\newpage

\textbf{Protocol description} -- \\[0.1cm]
\noindent\fbox{
\parbox{0.93\linewidth}{
\textbf{Protocol 2:} Random key base, Spot checking, 4 parties, binary inputs and outputs .  \vspace{0.0cm}%
\hrule\vspace{0.1cm}%
\textbf{1. Measurements}: 
This step is carried out in $N$ rounds. The inputs to the rounds are drawn randomly and independently \\
For Alice: \hfill$P(x=0)=p_A, \;P(x=1)=1-p_A$ \\
For Bob: \hfill$P(y=0)=p_B, P(y=1)=1-p_B$\\
For Fred: \hfill$P(\tilde{x}=0)=p_F, \;P(\tilde{x}=1)=1-p_F$\\
For George: \hfill$P(\tilde{y}=0)=p_G, \; P(\tilde{y}=1)=1-p_G$\\
For Switch A: \hfill$P(T_A=0)=t_A,\;P(T_A=1)=1-t_A$\\
For Switch B: \hfill$P(T_B=0)=t_B,\; P(T_B=1)=1-t_B$ \\[0.2cm]
\textbf{2. Sifting:} All inputs are announced over authenticated public channels. A preshared binary random variable $R$ with distribution $(r,1-r)$ is used to select key generation rounds. In the key rounds sifting is performed in order to post-select on rounds with matching basis choices of Alice and Bob. This is, for rounds with $T_A=T_B=R=0$ and $x=y=0$ or $x=y=1$ Alice and Bob select a subset of length $\sim rt_A t_Bp_b(p_Bp_A+(1-p_A)(1-p_B)) N$. 
The outputs $(a,b)$ of these rounds form a raw key. The data of all other rounds is kept for parameter estimation.\\[0.2cm]
\textbf{3. Parameter estimation:} The data from the test rounds is used to estimate the distribution $p^s(a,\tilde a,b,\tilde b\,|\,x,\tilde x,y,\tilde y)$ by frequencies. If this estimate diverges from a predefined expected behaviour $p^s_{expect}$ by more than a threshold $\varepsilon_{pe}$ the protocol is aborted. \\[0.2cm]
\textbf{4. \& 5. Error correction and Privacy amplification:} 
One-way error correction from Alice to Bob and privacy amplification is performed. 
}
}\\[0.1cm]

\textbf{Parameter Choice} -- As for the previous protocol, the probability for selecting a switch-setting that leads to a key round, controlled by the the parameters $t_A,t_B$, can be made arbitrary close to $1$ for $N$ approaching the asymptotic regime. 
We again simulate this protocol with data obtained from a qubit implementation with depolarizing noise parametrized by visibilities $v_{loc}$ and $v$. 
In an honest implementation the local measurements for Alice are $X$ and $Z$, for $x=0$ and $x=1$, on a qubit. Fred and George perform measurements in the $(Z+ X)/\sqrt{2}$ basis for $\tilde{x}=0$ and respectively $\tilde{y}=0$, such as $(Z-X)/\sqrt{2}$ for $\tilde{x}=1$ and respectively $\tilde{y}=1$. Bob will also switch between the $X$ and the $Z$ bases for $y=0$ and $y=1$. The probability of switching the key bases is uniform, i.e., we set $p_B=p_A=1/2$. For the original setting without self-testing this was shown to be optimal \cite{schwonnek2020robust}.\\

\begin{figure}[h]
  \centering

 \definecolor{Vis0930}{HTML}{377EB8}
\definecolor{Vis0940}{HTML}{4DAF4A}
\definecolor{Vis0950}{HTML}{984EA3}
\definecolor{Vis0960}{HTML}{FF7F00}
\definecolor{Vis0970}{HTML}{E41A1C}
\definecolor{Vis0980}{HTML}{A65628}
\definecolor{Vis0990}{HTML}{F781BF}
\definecolor{Vis0995}{HTML}{999999}
\definecolor{Vis1000}{HTML}{1B9E77}

\begin{tikzpicture}
\begin{axis}[
  scaled x ticks=false,
  xticklabel=\pgfkeys{/pgf/number format/.cd,fixed,precision=3,zerofill}\pgfmathprintnumber{\tick},
  width=\linewidth,
  height=0.8\linewidth,
  xlabel={Visibility $v$},
  ylabel={Key rate},
  xticklabel=\pgfkeys{/pgf/number format/.cd,fixed,precision=2,zerofill}\pgfmathprintnumber{\tick},
  grid=both,
  grid style={dashed,very thin},
  tick align=outside,
  tick style={black},
  xmin=0.785, xmax=1.0,
  ymin=0, ymax=1.0,
  ymajorticks=true,
  scaled ticks=false,
  legend cell align=left,
  legend columns=3,
  legend style={
    at={(0.5,-0.2)},
    anchor=north,
    fill=white, fill opacity=0.85,
    draw opacity=1, text opacity=1,
    font=\small,
  },
]

\addplot+[
  black,
  thick,
  dash pattern=on 1.2pt off 1.6pt,
  line cap=round,
  mark=none,
  domain=0.785:1,
  samples=300,
  forget plot
]
{max(0, 1 - 2*hbin((1-x)/2))};

\IfFileExists{results_routed/keyrate_bb84_local_0p930.txt}{%
  \addplot+[thick, solid, mark=none, color=Vis0930]
    table[x index=0, y index=1] {results_routed/keyrate_bb84_local_0p930.txt};
  \addlegendentry{$v_{\operatorname{loc}}=0.93$}
  \addplot+[thick, densely dashed, mark=none, color=Vis0930, forget plot]
    table[x index=0, y index=1] {results_routed/keyrate_randomkeybasisbb84_local_0p930.txt};
}{}

\IfFileExists{results_routed/keyrate_bb84_local_0p950.txt}{%
  \addplot+[thick, solid, mark=none, color=Vis0950]
    table[x index=0, y index=1] {results_routed/keyrate_bb84_local_0p950.txt};
  \addlegendentry{$v_{\operatorname{loc}}=0.95$}
  \addplot+[thick, densely dashed, mark=none, color=Vis0950, forget plot]
    table[x index=0, y index=1] {results_routed/keyrate_randomkeybasisbb84_local_0p950.txt};
}{}

\IfFileExists{results_routed/keyrate_bb84_local_0p970.txt}{%
  \addplot+[thick, solid, mark=none, color=Vis0970]
    table[x index=0, y index=1] {results_routed/keyrate_bb84_local_0p970.txt};
  \addlegendentry{$v_{\operatorname{loc}}=0.97$}
  \addplot+[thick, densely dashed, mark=none, color=Vis0970, forget plot]
    table[x index=0, y index=1] {results_routed/keyrate_randomkeybasisbb84_local_0p970.txt};
}{}

\IfFileExists{results_routed/keyrate_bb84_local_0p990.txt}{%
  \addplot+[thick, solid, mark=none, color=Vis0990]
    table[x index=0, y index=1] {results_routed/keyrate_bb84_local_0p990.txt};
  \addlegendentry{$v_{\operatorname{loc}}=0.99$}
  \addplot+[thick, densely dashed, mark=none, color=Vis0990, forget plot]
    table[x index=0, y index=1] {results_routed/keyrate_randomkeybasisbb84_local_0p990.txt};
}{}

\IfFileExists{results_routed/keyrate_bb84_local_1p000.txt}{%
  \addplot+[thick, solid, mark=none, color=Vis1000]
    table[x index=0, y index=1] {results_routed/keyrate_bb84_local_1p000.txt};
  \addlegendentry{$v_{\operatorname{loc}}=1.00$}
  \addplot+[thick, densely dashed, mark=none, color=Vis1000, forget plot]
    table[x index=0, y index=1] {results_routed/keyrate_randomkeybasisbb84_local_1p000.txt};
}{}

\end{axis}
\end{tikzpicture}
  \caption{\label{fig:routed-random-key-basis}
Numerical results for the routed random key basis protocol obtained by solving Eq.~\eqref{eq:optimization}
under an NPA relaxation with \cite{koßmann2024semidefiniteoptimizationquantumrelative}. Key rates are plotted versus the Werner visibility $v$ of the long-range link; colours label
the local visibility $v_{\mathrm{loc}}$.
Solid: one switch. Dashed: two switches.
Dotted: Shor--Preskill rate $r_{\mathrm{SP}}(Q)=1-2h_2(Q)$ with $Q=(1-v)/2$.}
\end{figure}

\textbf{Results} -- The numerical key rates for the routed random key basis protocol are shown in \autoref{fig:routed-random-key-basis}
as a function of the long-link visibility $v$ for several values of the local-test visibility $v_{\mathrm{loc}}$. Computations where done up to NPA-level 4 on a desktop computer. Every data point took minutes of compute time. We consider two versions of this protocols, one with (dashed) and one without George (solid). We observe two effects: First, compared to the fixed base protocol of the previous subsection, using a random key basis yields positive key rates already at lower long-link visibilities in the imperfect self-test regimes shown. Second, incorporating an additional local test on Bob's side (two switches, dashed curves) strictly improves the
certified rates whenever $v_{\mathrm{loc}}<1$. As in \autoref{fig:keyrate-qber-switches}, this gain is strongest away from the ideal self-test regime and vanishes for $v_{\mathrm{loc}}\to 1$, where the rates approach the device-dependent Shor--Preskill benchmark.\newpage


\subsubsection{E91 with self-testing} \label{prot:e91pp}

In an honest implementation of the previous protocols Alice and Bob perform measurements in the $X$ and $Z$ basis, and by this, basically implement a self-testing assisted version of the entanglement based BB84/BBM92 protocol. While we can confirm that this basis choice is optimal in cases of perfect self-test\,---\,namely, we recover the device-dependent Shor-Preskill rate\,---\,it is not clear that this choice is also optimal for regimes with a persistent self-testing error. In regimes with a very bad self-testing performance, this is $v_{loc}$ much smaller than one, we would expect an optimal protocol to perform like the fully device-independent settings \cite{pironio2009device} investigated previously. This limiting case does not seem to have been considered in previous protocol designs and we investigate it in the following. 

For fixed key bases, an optimal protocol implementable by qubits was described by Pironio {\it et al.}~\cite{pironio2009device}. Here Bob has three inputs. The first input $x=0$ labels key generation rounds, whereas inputs $x=1,2$ lead to data used for a CHSH test. We adapt a version of this protocol using four parties and local self-tests.\\

\textbf{Protocol description} -- \\[0.1cm]
\noindent\fbox{
\parbox{0.93\linewidth}{
\textbf{Protocol 3:} Spot checking, 4 parties, binary inputs and outputs for Alice Fred and George, 3 inputs and binary outputs for Bob. \vspace{0.1cm}
\hrule\vspace{0.1cm}
\textbf{1. Measurements}: 
This step is carried out in $N$ rounds. The inputs to the rounds are drawn randomly and independently \\
For Alice: $P(x=0)=p_A, \;P(x=1)=1-p_A$ \\
For Bob: $P(y=0)=p_B, P(y=1)=(1-p_B)q_B$, \\\qquad $P(y=2)=(1-p_B)(1-q_B)$\\
For Fred: $P(\tilde{x}=0)=p_F, \;P(\tilde{x}=1)=1-p_F$\\
For George: $P(\tilde{y}=0)=p_G, \; P(\tilde{y}=1)=1-p_G$\\
For Switch A: $P(T_A=0)=t_A,\;P(T_A=1)=1-t_A$\\
For Switch B: $P(T_B=0)=t_B,\; P(T_B=1)=1-t_B$ \\[0.2cm]
\textbf{2. Sifting:} All inputs are announced over authenticated public channels. A preshared binary random variable $R$ with distribution $(r,1-r)$ is used to select key generation rounds. From rounds with $x=y=T_A=T_B=R=0$ Alice and Bob select a subset of length $\sim rt_A t_B p_A p_B N$. The outputs $(a,b)$ of these rounds form a raw key. The data of all other rounds is kept for parameter estimation.\\[0.2cm]
\textbf{3. Parameter estimation:} The data from the test rounds is used to estimate the distribution $p^s(a,\tilde a,b,\tilde b\,|\,x,\tilde x,y,\tilde y)$ by frequencies. If this estimate diverges from a predefined expected behaviour $p^s_{expect}$ by more than a threshold $\varepsilon_{pe}$ the protocol is aborted. \\[0.2cm]
\textbf{4. \& 5. Error correction and Privacy amplification:} 
One-way error correction from Alice to Bob and privacy amplification is performed. 
}
}\\[0.1cm]
\newpage

\textbf{Parameter Choice} -- As for the previous protocols, the probability for selecting a switch-setting that leads to a key round, controlled by the the parameters $t_A,t_B,p_B,p_A$, can be made arbitrary close to $1$ for $N$ approaching the asymptotic regime. All remaining inputs are sampled uniformly. This is, we set $q_B=p_F=p_G=1/2$. 
We again simulate this protocol with data obtained from a qubit implementation with depolarizing noises parametrized by visibilities $v_{loc}$ and $v$. 
In an honest implementation the local measurements for Alice are $X$ and $Z$, for $x=0$ and $x=1$, on a qubit. Fred and George perform measurements in the $(Z+ X)/\sqrt{2}$ basis for $\tilde{x}=0$ and respectively $\tilde{y}=0$, such as $(Z-X)/\sqrt{2}$ for $\tilde{x}=1$ and respectively $\tilde{y}=1$.
Bob will  measure in the $X$   base for  $y=0$. For  $y=1,2$ he will measure in the $(X\pm Z )/\sqrt{2}$ base.\\[0.1cm]

\begin{figure}
  \centering
\definecolor{Vis0940}{HTML}{377EB8}
\definecolor{Vis0960}{HTML}{4DAF4A}
\definecolor{Vis0980}{HTML}{984EA3}
\definecolor{Vis0990}{HTML}{FF7F00}
\definecolor{Vis0993}{HTML}{E41A1C}
\definecolor{Vis0995}{HTML}{A65628}
\definecolor{Vis0997}{HTML}{F781BF}
\definecolor{Vis0999}{HTML}{999999}
\definecolor{Vis1000}{HTML}{1B9E77}

\begin{tikzpicture}
\begin{axis}[
  scaled x ticks=false,
  width=\linewidth,
  height=0.8\linewidth,
  xlabel={Visibility $v$},
  ylabel={Key rate},
  xticklabel=\pgfkeys{/pgf/number format/.cd,fixed,precision=3,zerofill}\pgfmathprintnumber{\tick},
  grid=both,
  grid style={dashed,very thin},
  tick align=outside,
  tick style={black},
  xmin=0.77, xmax=1.0,
  ymin=0, ymax=1.0,
  ymajorticks=true,
  scaled ticks=false,
  legend cell align=left,
  legend columns=3,
  legend style={
    at={(0.5,-0.2)},
    anchor=north,
    fill=white, fill opacity=0.85,
    draw opacity=1, text opacity=1,
    font=\small,
  },
]

\addplot+[
  black,
  thick,
  dotted,
  line cap=round,
  mark=none,
  domain=0.785:1,
  samples=400,
  forget plot
]
{RSPplus(x)};

\addplot+[
  black,
  thick,
  dash pattern=on 2.2pt off 1.8pt,
  line cap=round,
  mark=none,
  domain=0.785:1,
  samples=400,
  forget plot
]
{RPironioPlus(x)};

\IfFileExists{results_routed/bb84_chsh_rotated_bases_keyrate_table_by_local_local_0p960.txt}{%
  \pgfplotstablegetcolsof{results_routed/bb84_chsh_rotated_bases_keyrate_table_by_local_local_0p960.txt}%
  \pgfmathtruncatemacro{\lastcol}{\pgfplotsretval-1}%
  \addplot+[thick, solid, mark=none, color=Vis0960]
    table[col sep=space, x index=0, y index=\lastcol]
    {results_routed/bb84_chsh_rotated_bases_keyrate_table_by_local_local_0p960.txt};
  \addlegendentry{$v_{\operatorname{loc}}=0.96$}%
}{}

\IfFileExists{results_routed/bb84_chsh_rotated_bases_keyrate_table_by_local_local_0p980.txt}{%
  \pgfplotstablegetcolsof{results_routed/bb84_chsh_rotated_bases_keyrate_table_by_local_local_0p980.txt}%
  \pgfmathtruncatemacro{\lastcol}{\pgfplotsretval-1}%
  \addplot+[thick, solid, mark=none, color=Vis0980]
    table[col sep=space, x index=0, y index=\lastcol]
    {results_routed/bb84_chsh_rotated_bases_keyrate_table_by_local_local_0p980.txt};
  \addlegendentry{$v_{\operatorname{loc}}=0.98$}%
}{}

\IfFileExists{results_routed/bb84_chsh_rotated_bases_keyrate_table_by_local_local_0p990.txt}{%
  \pgfplotstablegetcolsof{results_routed/bb84_chsh_rotated_bases_keyrate_table_by_local_local_0p990.txt}%
  \pgfmathtruncatemacro{\lastcol}{\pgfplotsretval-1}%
  \addplot+[thick, solid, mark=none, color=Vis0990]
    table[col sep=space, x index=0, y index=\lastcol]
    {results_routed/bb84_chsh_rotated_bases_keyrate_table_by_local_local_0p990.txt};
  \addlegendentry{$v_{\operatorname{loc}}=0.99$}%
}{}

\IfFileExists{results_routed/bb84_chsh_rotated_bases_keyrate_table_by_local_local_0p993.txt}{%
  \pgfplotstablegetcolsof{results_routed/bb84_chsh_rotated_bases_keyrate_table_by_local_local_0p993.txt}%
  \pgfmathtruncatemacro{\lastcol}{\pgfplotsretval-1}%
  \addplot+[thick, solid, mark=none, color=Vis0993]
    table[col sep=space, x index=0, y index=\lastcol]
    {results_routed/bb84_chsh_rotated_bases_keyrate_table_by_local_local_0p993.txt};
  \addlegendentry{$v_{\operatorname{loc}}=0.993$}%
}{}

\IfFileExists{results_routed/bb84_chsh_rotated_bases_keyrate_table_by_local_local_0p997.txt}{%
  \pgfplotstablegetcolsof{results_routed/bb84_chsh_rotated_bases_keyrate_table_by_local_local_0p997.txt}%
  \pgfmathtruncatemacro{\lastcol}{\pgfplotsretval-1}%
  \addplot+[thick, solid, mark=none, color=Vis0997]
    table[col sep=space, x index=0, y index=\lastcol]
    {results_routed/bb84_chsh_rotated_bases_keyrate_table_by_local_local_0p997.txt};
  \addlegendentry{$v_{\operatorname{loc}}=0.997$}%
}{}

\IfFileExists{results_routed/bb84_chsh_rotated_bases_keyrate_table_by_local_local_0p999.txt}{%
  \pgfplotstablegetcolsof{results_routed/bb84_chsh_rotated_bases_keyrate_table_by_local_local_0p999.txt}%
  \pgfmathtruncatemacro{\lastcol}{\pgfplotsretval-1}%
  \addplot+[thick, solid, mark=none, color=Vis0999]
    table[col sep=space, x index=0, y index=\lastcol]
    {results_routed/bb84_chsh_rotated_bases_keyrate_table_by_local_local_0p999.txt};
  \addlegendentry{$v_{\operatorname{loc}}=0.999$}%
}{}

\IfFileExists{results_routed/bb84_chsh_rotated_bases_keyrate_table_by_local_local_1p000.txt}{%
  \pgfplotstablegetcolsof{results_routed/bb84_chsh_rotated_bases_keyrate_table_by_local_local_1p000.txt}%
  \pgfmathtruncatemacro{\lastcol}{\pgfplotsretval-1}%
  \addplot+[thick, solid, mark=none, color=Vis1000]
    table[col sep=space, x index=0, y index=\lastcol]
    {results_routed/bb84_chsh_rotated_bases_keyrate_table_by_local_local_1p000.txt};
  \addlegendentry{$v_{\operatorname{loc}}=1.00$}%
}{}

\end{axis}
\end{tikzpicture}

  \caption{\label{fig:interpolation}
Numerical asymptotic key rates for Protocol~3 (four-party $E91^{+}$ spot-checking, adapted from Pironio \emph{et al.} \cite{pironio2009device}) as a function of the long-link Werner visibility $v$.
Coloured curves correspond to different local-test visibilities $v_{\mathrm{loc}}$ (depolarizing noise on the local links), as indicated in the legend.
Black dotted: Device-dependent Shor--Preskill benchmark $r_{\mathrm{SP}}(Q)=1-2h_2(Q)$ with $Q=(1-v)/2$. Black dashed: Qubit-based fully device-independent benchmark from Pironio \emph{et al.}~\cite{pironio2009device}.
}
\end{figure}

\textbf{Results} -- \autoref{fig:interpolation} shows asymptotic key rates for Protocol~3 as a function of the long-link  visibility $v$, for several values of the local-test visibility $v_{\mathrm{loc}}$.
Computations where done up to NPA-level 4 on a desktop computer. Every data point took seconds (!) of compute time. As benchmarks, we include the device-dependent Shor--Preskill rate and the rate for the fully device-independent protocol of \cite{pironio2009device} (black curves).
Across the shown range of $v_{\mathrm{loc}}$, the certified key rate increases monotonically with both $v$ and $v_{\mathrm{loc}}$, and approaches the device-dependent benchmark as $v_{\mathrm{loc}}\to 1$, consistent with the local self-test becoming ideal.
Most importantly, for this protocol  of self-testing with visibility $v_{loc}$ now interpolates between an optimal  device-independent and an optimal device-dependent protocol. 

An interesting observation is that we can approach the Shor--Preskill rate even though Bob, in an honest implementation, does not perform the measurements of the BB84-protocol. An explanation for this is as follows: For a characterized Bob ($v_{\mathrm{loc}}=1$), the observable required for phase-error estimation in the Shor--Preskill bound lies in the linear span of the two tilled CHSH test observables performed Bob in our case. So, the missing complementary basis is already determined by the tested directions. Certifying that Bob's effective measurements satisfy this relation in the uncharacterised setting requires George's local self-test, which fixes Bob's measurement plane (up to the usual isometries).
In this sense, the additional party makes a substantial difference, as it enables the certification needed to recover device-dependent performance in the $v_{\mathrm{loc}}\to 1$ limit.\\

\textbf{Optimization of Bob's test basis} -- An obvious variation of the above protocol would be to further alter the measurement basis of Bob's test measurements ($y=1,2$) in the ideal honest implementation. We employed local optimisation heuristics in order to find better a basis. As it turns out, setting Bob's basis mutually unbiased and rotated by $\pi/4$ with respect to the key base is optimal and is exactly performing the implementation described in \cite{pironio2009device}.


\section{Discussion}

We provide an investigation of device-independent QKD protocols assisted by local self-tests. Our study  extends on several previous works by providing mathematically and operationally consistent model in which we can incorporate architectures with multiple switches and self-tests at different location can be analysed. Concretely, we considered local Bell tests at both endpoints of a long-distance link and numerically compute device-independent key-rate bounds for three new protocols. Our numerical analysis indicates that adding a local certification on Bob's side can strictly improve achievable rates when local tests are imperfect. Furthermore, incorporating random key-basis choices can further lower the zero-key threshold compared to fixed-basis routed variants. In the ideal self-testing limit, the rates approach device-dependent BB84 benchmarks, consistent with the intuition that sufficiently strong local certification mitigates the detection-efficiency bottleneck and leaves the rate primarily limited by standard bit/phase errors.

Several task for future work remain. So far, we focus on the asymptotic i.i.d.~setting. Despite being technically possible, the incorporation of finite-size security is left to a point in time when concrete experimental implementations will be investigated. Our numerical investigations also hinge on simulated data, that draws a pessimistic, but still simplified, picture of an realistic noise profile. As of today, dedicated experiments implementing the two-sided routed architecture studied here are still missing. Closing this gap requires an end-to-end demonstration that simultaneously validates the endpoint switch-independence assumptions, achieves high-quality local-test correlations on both sides, and maintains stable long-link visibility under realistic routing and heralding/relay mechanisms. We hope that the optimistic findings of this work will give momentum to the conception of a proof-of-concept experiment. 

Finally, two natural directions are to design an explicit protocol that combines two-sided routing in Protocol~3 with the random key-basis approach of \cite{schwonnek2020robust} in a unified optimization, and to extend the investigation beyond two self-testing parties per side to assess whether additional local testers yield provable and practically relevant rate improvements.


\section{Methods}

\subsection{Setting}

To illustrate our techniques, we consider one additional party on each side (see \autoref{fig:scheme_two_switches}) and, for simplicity, assign them distinct names -- Alice ($A$) and Fred ($A_F$) on Alice’s side, and Bob ($B$) and George ($B_G$) on Bob’s side. It will become clear from the steps below that nothing prevents a straightforward generalization to arbitrarily many parties on each side. To be concrete, we model the switches as sources that generate states at random. Moreover, the long-range experiment via the entanglement-swapping apparatus is represented by a state $\rho_{AB}$, whereby the local state in Alice’s lab by $\sigma_{AA_F}$ and the local state in Bob’s lab by $\tau_{BB_G}$. The assumption that each source acts as a switch\,---\,sending the second subsystem either to another local party or to the entanglement-swapping apparatus\,---\,justifies the marginal constraints
\begin{align}
  \rho_A = \sigma_A
  \quad \text{and} \quad
  \rho_B = \tau_B .
\end{align}
We assume that the sources have access to private, genuine randomness, and likewise that all measurements are sampled from private, genuine sources of local randomness within each laboratory. We further assume for simplicity that locality loopholes are closed. To extract key from this experiment, we must consider all scenarios compatible with the observed statistics, including all states satisfying Eq.~\eqref{eq:marginal_constraint}, together with projection-valued measures (PVMs) $\{M_{a\vert x}\}_a$ for Alice, $\{\tilde{M}_{\tilde{a}\vert \tilde{x}}\}_{\tilde{a}}$ for Fred, $\{N_{b\vert y}\}_b$ for Bob, and $\{\tilde{N}_{\tilde{b}\vert \tilde{y}}\}_{\tilde{b}}$ for George, where $x \in \mathcal{X}$, $\tilde{x} \in \tilde{\mathcal{X}}$, $y \in \mathcal{Y}$, and $\tilde{y} \in \tilde{\mathcal{Y}}$ and all input and output sets are assumed to be finite.


\subsection{Polynomial optimization}

Employing Eq.~\eqref{eq:marginal_constraint}, the optimization problem for $H(A\vert E)$ reads as (where we again denote $\psi_{ABE}$ as a purification of $\rho_{AB}$)
\begin{equation}\label{eq:optimization}
    \begin{aligned}
        \inf_{(\psi,\sigma,\tau)} \ &H(A\vert E)_{\psi} \\
    \operatorname{s.th.} \ &\psi_{ABE},\sigma_{AA_F},\tau_{BB_G} \ \quad \text{states}\\
    &\rho_A = \sigma_A \quad \text{and} \quad \rho_B = \tau_B, \\
    &\operatorname{tr}[\rho_{AB} M_{a\vert x}N_{b\vert y}] = q_{axby} \\
        &\operatorname{tr}[\sigma_{AA_F}M_{a\vert x}M^\prime_{a^\prime\vert x^\prime}] = p_{a a^\prime x x^\prime} \\
        &\operatorname{tr}[\tau_{BB_G}N_{b\vert y}N^\prime_{b^\prime\vert y^\prime}]  = r_{b b^\prime y y^\prime} \\
        &\sum_{a} M_{a\vert x} = 1, \ \sum_{\tilde{a}} \tilde{M}_{\tilde{a}\vert  \tilde{x}} = 1, \quad x \in \mathcal{X}, \ \tilde{x} \in \tilde{\mathcal{X}} \\
        &\sum_{a} N_{b\vert y} = 1, \ \sum_{\tilde{b}} \tilde{N}_{\tilde{b}\vert  \tilde{y}} = 1,  \quad y \in \mathcal{Y}, \ \tilde{y} \in \tilde{\mathcal{Y}} \\
        &[M_{a\vert x},\tilde{M}_{\tilde{a}\vert \tilde{x}}] = [M_{a\vert x},N_{b\vert y}]  = [M_{a\vert x},\tilde{N}_{\tilde{b}\vert \tilde{y}}] = 0, \\
        &[N_{b\vert y},\tilde{N}_{\tilde{b}\vert \tilde{y}}] = 0, \quad \text{for all} \  a,b,x,\tilde{x},y,\tilde{y}.
    \end{aligned}
\end{equation}
A standard technique for solving Eq.~\eqref{eq:optimization} is the Navascués-Pironio-Acín (NPA) hierarchy, applicable when the objective function is a polynomial in the parties’ operators \cite{Navascus2007,Navascus2008}. In our setting, however, the objective is the conditional von Neumann entropy, which is a non-linear functional involving the operator-valued logarithm and is therefore not directly expressible as a polynomial in the operators. To Eq.~\eqref{eq:optimization}, we employ a technique recently developed by some of us \cite{ReliableEstimates_2025}. A relaxation using the NPA hierarchy of Eq.~\eqref{eq:optimization} usually yields many non-commutative variables, so a careful analysis is beneficial to remain resource-efficient. For this reason, we employ the methods from \cite{ReliableEstimates_2025}, which are more resource-efficient than those previously available \footnote{Specifically, in their most efficient form we only need as many variables as the guessing probability for a bunch of separated NPA programs.}, as they just need half of the number of non-commutative variables as \cite{Brown_2024}.

After applying our techniques from \cite{ReliableEstimates_2025}, we obtain a bona fide polynomial optimization problem, in which the marginal constraints can be written as several moment matrices with equality constraints between their entries. Even though this is feasible for sets of projective measurements with not too many inputs and outputs per party, we describe in the following an observation, yielding a resource-efficient relaxation of Eq.~\eqref{eq:optimization}. For this purpose, let us consider an example, with two inputs and two outputs, which, in an ideal case, can be identified with two projections on a qubit \cite{operator_algebra_self_test2024}. For two projections, one can characterize their equivalence class up to a unitary once the norm of their anti-commutator is known. The idea is simply to rewrite the angle between the projections on the Bloch sphere in terms of the norm of the anti-commutator. In this setting, the equality of the marginals in Eq.~\eqref{eq:marginal_constraint} can therefore be replaced by the value attained by the states on the anti-commutator instead of the whole algebra $\mathcal{A}$. Moreover, since the solvability of Eq.~\eqref{eq:optimization} deteriorates as the number of generators in the local algebras of Alice, Bob, Fred, and George increases, we relax Eq.~\eqref{eq:optimization} into several noncommutative optimization problems involving norm estimates of the commutator and the anti-commutator.


\subsection{Security assumptions}

As usual, we assume no knowledge of the devices’ internal workings beyond that they follow the laws of quantum theory and the assumed space-like separation. However, unlike \cite{Le_Roy_Deloison_2025,Tan_2024} depicted in \autoref{fig:scheme_two_switches}~(a), we must clarify how Eve is modelled. In our setting, Alice (with her partners) and Bob (with his partners) are strictly space-like separated, and Eve is space-like separated from both laboratories; all partners operate in secure labs and are not under Eve’s control. This does not violate the overall assumptions of a DIQKD experiment.

By contrast, in \cite{Le_Roy_Deloison_2025,Tan_2024} Eve has access to the switch and may measure the third party in \autoref{fig:scheme_two_switches}~(a) as well as her own system. A key obstacle to extending that model to arbitrarily many parties\,---\,especially intermediate parties between Alice and Bob\,---\,is that all relevant laboratories must be space-like separated. Allowing Eve simultaneous access to all partners would raise the question of her location. Independent of that, our model fully aligns with standard DIQKD assumptions while placing Alice’s and Bob’s partners in secure, space-like separated laboratories. This choice yields a substantially simpler analysis of the achievable key rate in Eq.~\eqref{eq:optimization}.

To prevent attacks on the source, we explicitly discuss how to model the marginal constraints, since they are crucial. Experimentally, the key assumption is that the sources are completely decoupled from Alice’s and Bob's devices. In particular, Alice’s (respectively Bob's) device must not alter its internal workings when the source sends different types of states. Thus, in any concrete implementation of our protocol, it is essential to argue that this assumption holds. A similar assumption is required in the routed scenario of~\autoref{fig:scheme_two_switches}~(a), cf.~the discussion below~\cite[Figure 1]{Le_Roy_Deloison_2025}.


\subsection{Arbitrarily many parties}

We conclude this method section by explaining how to extend our techniques to arbitrarily many parties. By generalizing the switch so that it can route states to an arbitrary number of parties in \autoref{fig:scheme_two_switches} (b), we can perform correlation experiments between Alice and each of those partners, and analogously for Bob. These additional parties can then be used to probe different types of correlations between Alice’s measurements and those of the corresponding partner. The marginal constraint then generalizes to the fact that all marginals on Alice side equal and all marginals on Bob's side equal.\\



\begin{acknowledgments}
M.B.\ and G.K.\ acknowledge support from the Excellence Cluster -- Matter and Light for Quantum Computing (ML4Q-2) and funding by the European Research Council (ERC Grant Agreement No.\ 948139). R.S.\ is supported by the DFG under Germany's Excellence Strategy - EXC-2123 QuantumFrontiers-2 - 390837967 and SFB 1227 (DQ-mat), the Quantum Valley Lower Saxony, and the BMBF projects ATIQ, SEQUIN, Quics and CBQD. The numerical data is available on GitHub \cite{github_numerics}.
\end{acknowledgments}

\bibliographystyle{apsrev4-2}
\bibliography{main} 


\appendix
\setcounter{theorem}{0}

\begin{widetext}


\section{\texorpdfstring{Comparison with the work of Le-Roy et.~al \cite{Le_Roy_Deloison_2025}}{}}\label{appendix:comparison}

Given a routed Bell-test scenario we aim to deconstruct the ways to construct for inputs $\mathcal{A},\mathcal{B},\mathcal{C}$ outputs $\mathcal{X},\mathcal{Y},\mathcal{Z}$. To be concrete, we consider distributions of the form
\begin{align}
    p(a,b\vert x,y) \quad \text{and} \quad p(a,c\vert x,z), \quad \text{for elements} \quad a,b,c,x,y,z.
\end{align}

In the following we aim to concretely compare a one-sided setting as denoted in \autoref{fig:scheme_two_switches} and Eq.~\eqref{eq:optimization}, which would yield
\begin{enumerate}
    \item[(a)] PVMs $\{A_{a\vert x}\}$ on $\mathcal{H}_A$, $\{B_{b\vert y}\}$ on $\mathcal{H}_B$, $\{T_{c\vert z}\}$ on $\mathcal{H}_T$
    \item[(b)] two states $\rho_{ABT}^{S},\rho_{ABT}^{L}$ with $\rho^{L}_A = \rho_A^{S}$
    \item[(c)] observed statistics
    \begin{align}\label{eq:our_correlations}
        \operatorname{tr} \big[A_{a\vert x} \otimes B_{b\vert y} \otimes 1_T  \rho_{ABT}^{L}\big] &= p(a,b\vert x,y) \\
        \operatorname{tr} \big[A_{a\vert x} \otimes 1_B \otimes T_{c\vert z}  \rho_{ABT}^{S}\big] &= p(a,c\vert x,z)
    \end{align}
\end{enumerate}

In contrast, the model imposed by \cite{Le_Roy_Deloison_2025} would be built out of
\begin{enumerate}
    \item PVMs $\{A_{a\vert x}\}$ on $\mathcal{H}_A$, $\{B_{b\vert y}\}$ on $\mathcal{H}_{\tilde{B}}$, and $\{T_{c\vert z}\}$ on $\mathcal{H}_{\tilde{B}} \otimes \mathcal{H}_{E^{\prime}}$
    \item a pure state $\psi_{A\tilde{B}E^{\prime}}$
    \item statistics given by
    \begin{align}\label{eq:correlations_pironio}
        \bra{\psi_{A\tilde{B}E^{\prime}}} A_{a \vert x} \otimes B_{b\vert y} \otimes 1_{E^{\prime}} \ket{\psi_{A\tilde{B}E^{\prime}}} &= p(a,b\vert x,y) \\
        \bra{\psi_{A\tilde{B}E^{\prime}}} A_{a \vert x} \otimes T_{c\vert z} \ket{\psi_{A\tilde{B}E^{\prime}}} &= p(a,c\vert x,z)
    \end{align}
\end{enumerate}

\begin{theorem}\label{thm:implications_models}
    Given a model of type (a--c) with states $\rho^{L}_{ABT}$ and $\rho^{S}_{ABT}$ producing the statistics
    \begin{align}
      p(a,b \vert x,y)&=\operatorname{tr} \big[(A_{a\vert x}\otimes B_{b\vert y}\otimes \mathbf 1_T) \rho^{L}_{ABT}\big],\\
      p(a,c \vert x,z)&=\operatorname{tr} \big[(A_{a\vert x}\otimes \mathbf 1_B\otimes T_{c\vert z}) \rho^{S}_{ABT}\big],
    \end{align}
    and with matching $A$-marginals $\rho^{L}_A=\rho^{S}_A$, let $\ket{\chi^{L}_{ABTE}}$ be a purification of $\rho^{L}_{ABT}$. Then there exists a model of type (1--3) with a pure state $\ket{\psi_{A\tilde B E^{\prime}}}$ and PVMs on $A$, $\tilde B$, and for each $z$ a PVM $\{\Pi^{(z)}_c\}$ on $\tilde B E^{\prime}$ such that the statistics of Eq.~\eqref{eq:correlations_pironio} coincide with those in Eq.~\eqref{eq:our_correlations} and
    \begin{align}
        H(A \vert E)_{\chi^{L}_{AE}} = H(A \vert E^{\prime})_{\psi_{AE^{\prime}}}.
    \end{align}
\end{theorem}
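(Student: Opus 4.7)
My plan is to take the given purification $\ket{\chi^{L}_{ABTE}}$ itself as the pure state of the new model, with the identifications $\mathcal{H}_{\tilde B}:=\mathcal{H}_B\otimes\mathcal{H}_T$, $\mathcal{H}_{E^{\prime}}:=\mathcal{H}_E$, and $\ket{\psi_{A\tilde B E^{\prime}}}:=\ket{\chi^{L}_{ABTE}}$. The long-link PVM on $\tilde B$ is simply $B_{b\vert y}\otimes\mathbf{1}_T$. With this single identification the two easy parts of the theorem follow at once: the long-link statistics $p(a,b\vert x,y)$ are reproduced because tracing $E$ out of $\chi^{L}$ returns $\rho^{L}_{ABT}$, and the conditional entropy statement reduces to the tautology $\psi_{AE^{\prime}}=\operatorname{tr}_{\tilde B}\psi=\operatorname{tr}_{BT}\chi^{L}_{ABTE}=\chi^{L}_{AE}$.

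The actual work will be to construct a PVM $\{T^{\mathrm{new}}_{c\vert z}\}$ on $\mathcal{H}_{\tilde B}\otimes\mathcal{H}_{E^{\prime}}=\mathcal{H}_B\otimes\mathcal{H}_T\otimes\mathcal{H}_E$ reproducing the short-link statistics $p(a,c\vert x,z)$, since these are generated in the original model by a \emph{different} state $\rho^{S}_{ABT}$, whereas the reduction of our new $\psi$ to $ABT$ is $\rho^{L}_{ABT}$. Here I plan to exploit the marginal assumption $\rho^{L}_{A}=\rho^{S}_{A}$: picking any purification $\ket{\chi^{S}_{ABTE}}$ of $\rho^{S}_{ABT}$ on the same global Hilbert space (enlarging $\mathcal{H}_E$ by a fixed ancilla if necessary), Uhlmann's theorem applied to the common $A$-marginal yields a unitary $V$ on $\mathcal{H}_B\otimes\mathcal{H}_T\otimes\mathcal{H}_E$ with $(\mathbf{1}_A\otimes V)\ket{\chi^{L}}=\ket{\chi^{S}}$. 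I would then set
\[
T^{\mathrm{new}}_{c\vert z}\;:=\;V^{\dagger}\bigl(\mathbf{1}_B\otimes T_{c\vert z}\otimes\mathbf{1}_E\bigr) V,
\]
which is a PVM because $V$ is unitary and $\{T_{c\vert z}\}_c$ is one, and a direct cancellation of the $V$'s in $\bra{\psi}A_{a\vert x}\otimes T^{\mathrm{new}}_{c\vert z}\ket{\psi}=\bra{\chi^{S}}A_{a\vert x}\otimes\mathbf{1}_B\otimes T_{c\vert z}\otimes\mathbf{1}_E\ket{\chi^{S}}$ delivers exactly $p(a,c\vert x,z)$.

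The only genuine obstacle is to ensure that the Uhlmann map can be chosen \emph{unitary} rather than merely an isometry, since otherwise the conjugation $V^{\dagger}(\cdot)V$ would produce only a POVM and break the PVM requirement of the Le-Roy-type model. I would resolve this by the standard dilation trick of enlarging $\mathcal{H}_E$ with a fixed ancilla so that $\chi^{L}$ and $\chi^{S}$ genuinely live in a common Hilbert space of sufficiently large dimension; such an enlargement changes neither the marginals used above nor the conditional entropy $H(A\vert E)_{\chi^{L}_{AE}}$, so the previously established entropy equality is preserved. Once this bookkeeping is in place, commutation of $A$-, $\tilde B$-, and $T^{\mathrm{new}}$-measurements is automatic from their disjoint tensor factors, and all four properties demanded by the theorem are satisfied.
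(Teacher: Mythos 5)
Your proposal is correct and reaches the same conclusion by a genuinely shorter route. The paper also starts from the matching $A$-marginals and invokes Uhlmann's theorem, but because it keeps separate purification spaces ($E$ for $\rho^L$ and $E_S$ for $\rho^S$) the Uhlmann map is only an isometry $W:\mathcal H_{BTE}\to\mathcal H_{BTE_S}$. Conjugating by $W$ then produces POVMs, so the paper needs a second step\,---\,a Naimark dilation $V_z$ into a fresh ancilla $\mathcal K$ (with the bookkeeping of fixing one global $\mathcal K = \bigoplus_z\mathbb C^{m_z}$, unitaries $W_z$, and the compression identity $V^\dagger\Pi^{(z)}_c V = \widehat T_{c\vert z}$)\,---\,to turn these POVMs into the PVMs that the target model requires. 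By instead fixing a common purification space for both $\rho^L$ and $\rho^S$ up front (enlarging $\mathcal H_E$ by a spectator ancilla in a pure state), you make the Uhlmann map a unitary $V$, so $V^\dagger(\mathbf 1_B\otimes T_{c\vert z}\otimes\mathbf 1_E)V$ is already a PVM and the Naimark step disappears. Your observation that the pure-state ancilla changes neither the short- and long-link statistics nor $H(A\vert E)$ is exactly what keeps the entropy equality intact; in the paper the analogous fact is handled by the $\ket{\psi}=\ket{\psi'}\otimes\ket{0}_{\mathcal K}$ embedding at the Naimark stage. The two proofs are equivalent in content; yours trades the paper's explicit Naimark construction for a slightly stronger statement of Uhlmann (unitary rather than isometric, valid after matching dimensions), which is a clean simplification.

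One small point of notation: you write ``$\mathcal H_{E'}:=\mathcal H_E$'' at the start but then enlarge $\mathcal H_E$; for the final statement $E'$ should be the enlarged space $E\otimes E_{\mathrm{anc}}$, and the theorem's equality $H(A\vert E)_{\chi^L_{AE}}=H(A\vert E')_{\psi_{AE'}}$ holds because the ancilla is pure and uncorrelated. You have effectively noted this, but it is worth stating $E'$ explicitly to avoid a mismatch with the quantified statement.
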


\begin{proof}
Let $\ket{\chi^{L}_{ABTE}}$ and $\ket{\varphi^{S}_{ABTE_S}}$ be purifications of $\rho^{L}_{ABT}$ and $\rho^{S}_{ABT}$, respectively. Since $\rho^{L}_A=\rho^{S}_A$, Uhlmann’s theorem yields an isometry
\begin{align}
W:\mathcal H_{BTE}\longrightarrow\mathcal H_{BTE_S} \quad \text{with} \quad
(1_A\otimes W)\ket{\chi^{L}}=\ket{\varphi^{S}}.
\end{align}

Define the compressed POVMs on $BTE$:
\begin{align}
\widetilde T_{c\vert z} \coloneqq W^\dagger \big(1_B \otimes T_{c\vert z} \otimes 1_{E_S}\big) W,
\qquad \widetilde T_{c\vert z}\ge 0,\quad \sum_c \widetilde T_{c\vert z}=1_{BTE}.
\end{align}
Then for all $a,c,x,z$,
\begin{align}
\bra{\chi^{L}} A_{a\vert x}\otimes \widetilde T_{c\vert z}\ket{\chi^{L}}
=\bra{\varphi^{S}} A_{a\vert x}\otimes 1_B \otimes T_{c\vert z}\otimes 1_{E_S}\ket{\varphi^{S}}
=\operatorname{tr} \big[(A_{a\vert x}\otimes 1_B \otimes T_{c\vert z}) \rho^{S}_{ABT}\big]
=p(a,c \vert x,z).
\end{align}
By definition of $\ket{\chi^{L}}$ we also have for all $a,b,x,y$,
\begin{align}
\bra{\chi^{L}} A_{a\vert x}\otimes B_{b\vert y}\otimes 1_T\otimes 1_E \ket{\chi^{L}} = p(a,b \vert x,y).
\end{align}
Choose a unitary $U_{BT}:\mathcal H_B \otimes \mathcal H_T\to\mathcal H_{\tilde B}$ and set
\begin{align}
\ket{\psi^{\prime}_{A\tilde B E}}\coloneqq(1_A\otimes U_{BT}\otimes 1_E)\ket{\chi^{L}_{ABTE}},\qquad
\tilde B_{b\vert y}\coloneqq U_{BT}(B_{b\vert y}\otimes 1_T)U_{BT}^\dagger,
\end{align}
\begin{align}
\widehat T_{c\vert z}\coloneqq(U_{BT}\otimes 1_E) \widetilde T_{c\vert z} (U_{BT}\otimes 1_E)^\dagger.
\end{align}
Then $\{\tilde B_{b\vert y}\}$ is a PVM on $\tilde B$, each $\{\widehat T_{c\vert z}\}$ is a POVM on $\tilde B E$, and
\begin{align}
\bra{\psi^{\prime}} A_{a\vert x}\otimes \tilde B_{b\vert y}\otimes 1_E \ket{\psi^{\prime}} = p(a,b \vert x,y),\qquad
\bra{\psi^{\prime}} A_{a\vert x}\otimes \widehat T_{c\vert z}\ket{\psi^{\prime}} = p(a,c \vert x,z).
\end{align}

Set $\mathcal H \coloneqq \mathcal H_{\tilde B}\otimes \mathcal H_E$. For each $z$, apply Naimark’s theorem to the POVM $\{\widehat T_{c\vert z}\}_c$ on $\mathcal H$ to obtain a space $\mathcal{K}_z\simeq\mathbb C^{m_z}$ and the canonical isometry\footnote{It is easy to show that $V_z^\dagger V_z = 1_{\mathcal H}$.}
\begin{align}
V_z:\mathcal{H} \to \mathcal{H} \otimes \mathcal{K}_z,\qquad \ket{\phi} \mapsto \sum_c \sqrt{\widehat T_{c\vert z}} \ket{\phi} \otimes \ket{c}.
\end{align}
Now fix a single ancilla space $\mathcal{K}\coloneqq\bigoplus_z \mathbb C^{m_z}$ with orthonormal basis $\{\ket{z,c}\}_{z,c}$, and define the inclusions $J_z:\mathbb C^{m_z}\hookrightarrow\mathcal K$ by $J_z\ket{c}\coloneqq\ket{z,c}$. Fix the embedding
\begin{align}
V:\mathcal{H} \to \mathcal{H} \otimes \mathcal{K},\qquad \ket{\phi} \mapsto \ket{\phi}\otimes \ket{0},
\end{align}
for some distinguished $\ket{0}\in\mathcal K$. There exists a unitary $W_z$ on $\mathcal H\otimes\mathcal K$ such that
\begin{align}
W_z (1_{\mathcal H}\otimes J_z) V_z = V.
\end{align}
This is since $(1_{\mathcal H}\!\otimes J_z)V_z$ and $V$ are isometries with the same initial space $\mathcal H$, their ranges have equal dimension, so the partial isometry sending one range to the other extends to a unitary on $\mathcal H\otimes\mathcal K$.

Define, for each $z$, the orthogonal projections on $\mathcal H\otimes\mathcal K$ by
\begin{align}
\Pi_{c}^{(z)}\coloneqq W_z \big(1_{\mathcal H}\otimes \ket{z,c}\bra{z,c}\big) W_z^\dagger,\qquad
\Pi_{\perp}^{(z)} \coloneqq 1_{\mathcal H\otimes\mathcal K}-\sum_c \Pi_c^{(z)}.
\end{align}
Fix any distinguished outcome $c_0=c_0(z)$ and absorb the complement into $c_0$ by setting
\begin{align}
\widetilde\Pi_c^{(z)} \coloneqq 
\begin{cases}
\Pi_c^{(z)}, & c\neq c_0,\\
\Pi_{c_0}^{(z)}+\Pi_{\perp}^{(z)}, & c=c_0,
\end{cases}
\qquad\text{so that}\qquad \sum_c \widetilde\Pi_c^{(z)} = 1_{\mathcal H\otimes\mathcal K}.
\end{align}
Using $W_z (1_{\mathcal H}\otimes J_z) V_z = V$ one verifies the compression identity, for all $c$,
\begin{align}
V^\dagger \widetilde\Pi_c^{(z)} V \;=\; V_z^\dagger (1_{\mathcal H}\otimes \ket{c}\bra{c}) V_z \;=\; \widehat{T}_{c\vert z},
\end{align}
where we used $V^\dagger \Pi_{\perp}^{(z)} V=0$. In the following we rename $\widetilde\Pi_c^{(z)}$ back to $\Pi_c^{(z)}$.

Define the pure state
\begin{align}
\ket{\psi_{A\tilde B E^{\prime}}} \coloneqq (1_A\otimes V)\ket{\psi^{\prime}_{A\tilde B E}}
= \ket{\psi^{\prime}_{A\tilde B E}}\otimes \ket{0}_{\mathcal K},
\end{align}
and set $\mathcal H_{E^{\prime}}\coloneqq\mathcal H_E\otimes\mathcal K$. Let the Bob PVM be $\{\tilde B_{b\vert y}\}$ on $\tilde B$, and for each $z$ let the tester PVM be $\{\Pi^{(z)}_c\}$ on $\tilde B E^{\prime}$. Using $V^\dagger \Pi^{(z)}_c V=\widehat T_{c\vert z}$ and the definition of $\ket{\psi}$,
\begin{align}
\bra{\psi} A_{a\vert x}\otimes \tilde B_{b\vert y}\otimes 1_{E^{\prime}} \ket{\psi}
&= \bra{\psi^{\prime}} A_{a\vert x}\otimes \tilde B_{b\vert y}\otimes 1_{E} \ket{\psi^{\prime}} = p(a,b \vert x,y),\\
\bra{\psi} A_{a\vert x}\otimes \Pi^{(z)}_c \ket{\psi}
&= \bra{\psi^{\prime}} A_{a\vert x}\otimes \widehat T_{c\vert z} \ket{\psi^{\prime}} = p(a,c \vert x,z).
\end{align}
Thus the statistics in Eq.~\eqref{eq:correlations_pironio} and Eq.~\eqref{eq:our_correlations} coincide for the constructed (1--3) model.

For the conditional entropy, using duality and invariance under isometries on the conditioning system,
\begin{align}
        H(A\vert E)_{\chi^{L}_{AE}}
        = - H(A\vert BT)_{\chi^{L}}
        = - H(A\vert \tilde{B})_{\psi^{\prime}}
        = H(A\vert E)_{\psi^{\prime}_{AE}}
        = H(A\vert E^{\prime})_{\psi_{AE^{\prime}}}.
\end{align}
\end{proof}
\section{\texorpdfstring{On the Implementation of Eq.~\eqref{eq:optimization}}{}}

In this section we aim to give some details on the implementation of the following optimization program, which is a restatement of Eq.~\eqref{eq:optimization}.
\begin{equation}\label{eq:optimization_appendix}
    \begin{aligned}
        \inf_{(\psi,\sigma,\tau)} \ &H(A\vert E)_{\psi} \\
    \operatorname{s.th.} \ &\psi_{ABE},\sigma_{AA_F},\tau_{BB_G} \ \quad \text{states}\\
    &\rho_A = \sigma_A \quad \text{and} \quad \rho_B = \tau_B, \\
    &\operatorname{tr}[\rho_{AB} M_{a\vert x}N_{b\vert y}] = q_{axby} \\
        &\operatorname{tr}[\sigma_{AA_F}M_{a\vert x}M^\prime_{a^\prime\vert x^\prime}] = p_{a a^\prime x x^\prime} \\
        &\operatorname{tr}[\tau_{BB_G}N_{b\vert y}N^\prime_{b^\prime\vert y^\prime}]  = r_{b b^\prime y y^\prime} \\
        &\sum_{a} M_{a\vert x} = 1, \ \sum_{\tilde{a}} \tilde{M}_{\tilde{a}\vert  \tilde{x}} = 1, \quad x \in \mathcal{X}, \ \tilde{x} \in \tilde{\mathcal{X}} \\
        &\sum_{a} N_{b\vert y} = 1, \ \sum_{\tilde{b}} \tilde{N}_{\tilde{b}\vert  \tilde{y}} = 1,  \quad y \in \mathcal{Y}, \ \tilde{y} \in \tilde{\mathcal{Y}} \\
        &[M_{a\vert x},\tilde{M}_{\tilde{a}\vert \tilde{x}}] = [M_{a\vert x},N_{b\vert y}]  = [M_{a\vert x},\tilde{N}_{\tilde{b}\vert \tilde{y}}] = 0, \\
        &[N_{b\vert y},\tilde{N}_{\tilde{b}\vert \tilde{y}}] = 0, \quad \text{for all} \  a,b,x,\tilde{x},y,\tilde{y}.
    \end{aligned}
\end{equation}
In order to bound the conditional von Neumann entropy we refer to \cite{ReliableEstimates_2025}, which gives a detailed explanation how to solve the program with an efficient discretization. Particularly, we always use the speed-up from \cite[App. E]{ReliableEstimates_2025} (1), which changes the sum over the discretization and the sum infimum. In order to implement the equality conditions $\rho_A = \sigma_A$, we just relax this to all monomials on the certain NPA level. So, we basically add moment-equalities in the relaxed program in order to satisfy these conditions approximatively.  

\end{widetext}

\end{document}